\newtheorem{definition}{\emph{\underline{Definition}}}
\newtheorem{lemma}{\emph{\underline{Lemma}}}
\newtheorem{remark}{\bf \emph{\underline{Remark}}}
\def\({\left(}
\def\){\right)}
\def\b0{{\mathbf{0}}}
\begin{document}
\captionsetup[figure]{name={Fig.},labelsep=period,singlelinecheck=off}  
\title{Rotatable Antennas for Integrated Sensing and Communications}
\author{Chao Zhou, Changsheng You, Beixiong Zheng, Xiaodan Shao, and Rui Zhang,~\IEEEmembership{Fellow,~IEEE}
\thanks{C. Zhou and C. You  are with the Department of Electronic and Electrical Engineering, Southern University of Science and Technology (SUSTech), Shenzhen
518055, China (e-mail: zhouchao2024@mail.sustech.edu.cn, youcs@sustech.edu.cn). 
B. Zheng is with School of Microelectronics, South China University of Technology, Guangzhou 511442, China (e-mail: bxzheng@scut.edu.cn).
X. Shao is with the Department of Electrical and Computer Engineering, University of Waterloo, Waterloo, ON N2L 3G1, Canada (e-mail: x6shao@uwaterloo.ca).
R. Zhang is with the School of Science and Engineering, Shenzhen
Research Institute of Big Data, The Chinese University of Hong Kong,
Shenzhen, Guangdong 518172, China (e-mail: rzhang@cuhk.edu.cn).
\emph{(Corresponding author: Changsheng You.)}  
}}

\maketitle
	\begin{abstract}
 	 In this letter, we propose to deploy rotatable antennas (RAs) at the base station (BS) to enhance both communication and sensing (C\&S) performances, by exploiting a new spatial degree-of-freedom (DoF) offered by array rotation. Specifically, we formulate a multi-objective optimization problem to simultaneously maximize the sum-rate of multiple communication users and minimize the Cramér-Rao bound (CRB) for target angle estimation, by jointly optimizing the transmit beamforming vectors and the array rotation angle at the BS. 
 	 To solve this problem, we first equivalently decompose it into two subproblems, corresponding to an inner problem for beamforming optimization and an outer problem for array rotation optimization.
 	 Although these two subproblems are non-convex, we obtain their high-quality solutions by applying the block coordinate descent (BCD) technique and one-dimensional exhaustive search, respectively.
 	 Moreover, we show that for the communication-only case, RAs provide an additional rotation gain to improve communication performance; while for the sensing-only case, the equivalent spatial aperture can be enlarged by RAs for achieving higher sensing accuracy. 
 	 Finally, numerical results are presented to showcase the performance gains of RAs over  fixed-rotation antennas in integrated sensing and communications (ISAC). 
\end{abstract}

 \begin{IEEEkeywords}
		Rotatable antenna, integrated sensing and communication, Cram\' er-Rao bound.
\end{IEEEkeywords}

\section{Introduction}

	As the key enabling technique for the sixth-generation (6G) wireless network, integrated sensing and communications (ISAC) is expected to achieve both ubiquitous connectivity and pervasive sensing cost-effectively~\cite{LiuCRB2022,Shao2022Sensing}. 
	To this end, one efficient approach is to jointly design the signal waveforms for simultaneously enhancing the communication and sensing (C\&S) performances. This approach, however, may not be effective in practice when communication users and sensing targets are located in orthogonal subspaces~\cite{LiuCRB2022}.
	
	To exploit new spatial degrees of freedom (DoFs) for enhancing the C\&S performances,
	movable antennas (MAs)~\cite{Zhu2024Tutorial} and fluid antenna systems (FAS)~\cite{YouNGAT,WongFAS}, have been recently proposed, which achieve high resolution and beamforming gains via flexible adjustment of the antennas' positions without
	increasing the number of antennas~\cite{ma2024movable,zhou2024fluid}. However, the array rotation gain has not been well exploited in MAs/FAS, thereby limiting their performance in C\&S.
	To tackle this issue, the six-dimensional MA (6DMA) has recently been proposed to effectively exploit the DoF provided by antennas/subarrays' rotation for improving the wireless network capacity and sensing accuracy~\cite{Shao6DMATWC,Shao2025JSAC}. By jointly optimizing the three-dimensional (3D) positions and 3D rotations of all 6DMA subarrays in a given site space, the 6DMA-equipped transmitter/receiver can adaptively allocate antenna resources to maximize the array gain and spatial multiplexing gain according to the users'/targets' spatial distribution. As a special case of 6DMA, rotatable antennas (RAs) with fixed positions, have recently gained growing attention~\cite{zheng2025rotatable,Li2024RA} due to their low implementation cost and complexity. Specifically, RAs can reshape the antenna radiation pattern in the angular domain for achieving flexible beam coverage by adjusting the rotation angles of antennas. 
	While some initial works~(e.g., \cite{zheng2025rotatable,Li2024RA}) have studied the RA-based communication systems in enhancing the received power and transmission coverage, the potential of RAs for ISAC systems remains largely unexplored. For example, it remains unknown how much performance gain can be obtained by RAs for both C\&S.
	
	Motivated by the above, we consider in this letter a new RA-enabled ISAC system and study the C\&S performance gains provided by array rotation. Specifically, we consider the scenario where an RA-array is deployed at a base station (BS) to serve multiple communication users in the downlink and sense a target via echo signals in the uplink, as shown in Fig.~\ref{Fig:systemmodel_mo}.
	An optimization problem is formulated to simultaneously maximize the sum-rate of communication users and minimize the Cramér-Rao bound (CRB) for target angle estimation by jointly designing the transmit beamforming and the array rotation at the BS. To solve this non-convex problem, we first equivalently decompose it into two subproblems, corresponding to an inner problem for optimizing  beamforming vectors and an outer problem for designing rotation angle of the BS.
	Then, the block coordinate descent (BCD) technique and one-dimensional (1D) exhaustive search are leveraged to solve these two subproblems, respectively. Furthermore,  we show that for the communication-only case, RAs can enhance the received signal-to-noise ratio (SNR) at the communication user; while for the sensing-only case, the equivalent spatial aperture can be enlarged  by RAs for achieving more accurate target sensing. Finally, numerical results demonstrate the performance gains of RAs over  fixed-rotation antennas in improving both C\&S performances.

\section{System Model}\label{Sec2:label}
We consider an RA-enabled ISAC system as shown in Fig.~\ref{Fig:systemmodel_mo}, where a uniform linear RA-array equipped with $M_{\rm t}$ transmit antennas and $M_{\rm r}$ receive antennas is deployed at the BS to serve  $K$ single-antenna communication users in the downlink and sense one target via echo signals simultaneously in the monostatic ISAC mode.\footnote{The proposed design can be extended to the multi-target case by addressing additional issue of rank deficiency~\cite{LiuCRB2022}, which is left for our future work.} In addition to the transmit beamforming design, the 1D rotation angle of RA-array is judiciously controlled to balance between C\&S performances.

\begin{figure}[t]
	\centering
	\includegraphics[width=0.4\textwidth]{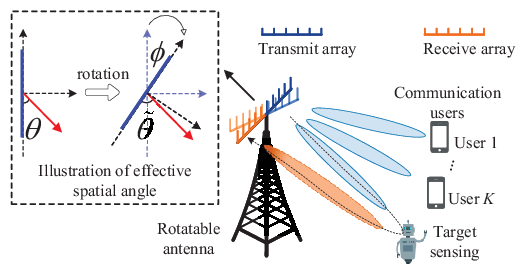}
	\caption{The schematic of RA-enabled ISAC systems.} \label{Fig:systemmodel_mo}
	\vspace{-10pt}
\end{figure}
\subsection{Model for Communication Users}
Let $ \mathbf{h}^H_{k} $ denote the channel from the RA-array to user $k$, which is modeled based on  far-field planar wavefronts \cite{LiuCRB2022}:
\begin{align}
	\mathbf{h}^H_{k}  = \beta_{k,0} \mathbf{a}^H(\tilde{\theta}_{k,0}  ) +  \sum_{\ell=1}^{L} \beta_{k,\ell} \mathbf{a}^H( \tilde{\theta}_{k,\ell}),
\end{align}
where $\beta_{k,0}$ and $\beta_{k,\ell}$ represent the complex path gains of the line-of-sight (LoS) path and the $\ell$-th non-LoS (NLoS) path for user $k$, respectively.\footnote{The omni-directional antenna gain model is adopted in this work for simplicity. However, the proposed scheme can be extended to the general directional antenna gain model considered in~\cite{Shao6DMATWC,Shao2025JSAC,zheng2025rotatable}.} $ \mathbf{a}({\tilde{\theta}_{k,\ell}}) $ denotes the channel steering vector accounting for the \emph{array-wise} antenna rotation, where $ \theta_{k,\ell} $ is the spatial angle,  $\phi$ is the RA-array rotation angle, and $\tilde{\theta}_{k,\ell} = \theta_{k,\ell} + \phi$, $\ell \in \{0,1,\ldots,L\}$, represents the \emph{effective} spatial angle of the $\ell$-th path accounting for the array's rotation. As such,  $\mathbf{a}({\tilde{\theta}_{k,\ell}})\in \mathbb{C}^{  M_{\rm t} \times 1  }$ can be modeled~as
\begin{align}
	\mathbf{a}({\tilde{\theta}_{k,\ell}}) = [e^{j \frac{2\pi}{\lambda} d_{{\rm t},1} \sin\tilde{\theta}_{k,\ell} },\ldots, e^{j \frac{2\pi}{\lambda} d_{{\rm t},M_{\rm t}} \sin\tilde{\theta}_{k,\ell} }]^T,
\end{align}
where $ d_{{\rm t},m_{\rm t}} $ denotes the distance between the $m_{\rm t}$-th transmit antenna and the center of the transmit array. For the uniform linear RA-array, given the half-wavelength inter-antenna spacing $d = \frac{\lambda}{2}$, $ d_{{\rm t},m_{\rm t}} $ can be obtained as
\begin{align}\label{Sum_dis_tr}
	d_{{\rm t},m_{\rm t}} =\frac{(m_{\rm t}-1)D_{\rm t}}{M_{\rm t}-1}  - \frac{D_{\rm t}}{2}, \forall m_{\rm t}\in \{1,2,\ldots,M_{\rm t}\},
\end{align}
where $ D_{\rm t} = (M_{\rm t} - 1)d $  is the aperture of the transmit array.

In the monostatic ISAC mode, the RA-array transmits $K$ independent information signals to $K$ users, which are reflected by the target as echoes for detecting the target at the BS. Let $T$ denote the length of transmission frame with $t \in \{1,2,\ldots,T\}$ representing its $t$-th slot. As such, the transmitted signal at the $t$-th slot, denoted by $ \mathbf{x}_{\rm t} \in \mathbb{C}^{M_{\rm t} \times 1 }$, is given by
$	\mathbf{x}_{\rm t} =\sum_{k=1}^{K}\mathbf{w}_{k} s_{k}$,
where $ \mathbf{w}_{k}\in \mathbb{C}^{M_{\rm t} \times 1} $  denotes the transmit beamforming vector for user $k$ and $s_{k} \sim \mathcal{CN}(0,1) $ denotes its transmitted symbol. Based on the above, the signal-to-interference-plus-noise ratio (SINR) at the $k$-th user can be expressed as 
\begin{align}
	\gamma_{k} = \frac{ |\mathbf{h}_{k}^{H} \mathbf{w}_{k}|^2 }{\sum_{i=1,i\neq k}^{K} |\mathbf{h}_{k}^{H} \mathbf{w}_{i}|^2 +\sigma_{k}^{2} },\forall k\in\{1,\ldots,K\},
\end{align}
with $\sigma_{k}^2$ denoting the received power of the additive white Gaussian noise (AWGN).

\subsection{Model for Target Sensing}
For point target sensing, we assume that there exists an LoS channel between the RA-array and the target, which has been widely adopted in radar sensing~\cite{LiuCRB2022}. Let  $ \mathbf{H}_{\rm s} $ denote the round-trip channel matrix from the transmit array to the target and then to the receive array, which can be modeled as
$	\mathbf{H}_{\rm s} = \beta_{\rm s} \mathbf{a}^H_{\rm T}(\tilde{\theta}) \mathbf{a}_{\rm R}(\tilde{\theta}).$
Herein, $ \beta_{\rm s} $ denotes the round-trip complex path gain accounting for radar cross section (RCS),  $ \mathbf{a}_{\rm T}(\tilde{\theta}) $ and $ \mathbf{a}_{\rm R}(\tilde{\theta}) $ represent the transmit and receive steering vectors, respectively,  which can be  modeled as
\begin{align}
		\mathbf{a}_{\rm T}({\tilde{\theta}}) = [e^{j \frac{2\pi}{\lambda} d_{{\rm t},1} \sin\tilde{\theta} },\ldots, e^{j \frac{2\pi}{\lambda} d_{{\rm t},M_{\rm t}}\sin\tilde{\theta} }]^T,\\
		\mathbf{a}_{\rm R}({\tilde{\theta}}) = [e^{j \frac{2\pi}{\lambda} d_{{\rm r},1} \sin\tilde{\theta} },\ldots, e^{j \frac{2\pi}{\lambda} d_{{\rm r},M_{\rm r}}\sin\tilde{\theta} }]^T,
\end{align}
where $d_{{\rm r},m_{{\rm r}}}$ denotes the distance between the $m_{{\rm r}}$-th receive antenna and the center of the receive array.
Note that $\tilde{\theta} =  \theta + \phi$ represents the \emph{effective} spatial angle of the target accounting for the array's rotation of the BS.
 
During the frame transmission period, the received signal matrix at the receive array, denoted by $ \mathbf{Y}_{\rm s} \in \mathbb{C}^{M_{\rm r} \times T }$, is given by 
	$\mathbf{Y}_{\rm s} = \mathbf{H}_{\rm s} \mathbf{X} + \mathbf{Z}_{\rm s}$,
where $\mathbf{X} = [\mathbf{x}_{1} ,\ldots,\mathbf{x}_{T} ] \in \mathbb{C}^{M_{\rm t} \times T} $ and $ \mathbf{Z}_{\rm s}\in \mathbb{C}^{M_{\rm r} \times T } $ is the noise matrix with each entry following $\mathcal{CN}(0,\sigma_{\rm s}^2)$. 

\subsection{Performance Metrics}\label{Sec2:C}
\textbf{\emph{Communication Metric}}: For communication users, we consider the achievable sum-rate as the communication performance metric, which is given by
\begin{align}
		f_{\rm c} \left( \{\mathbf{w}_{k}\},\phi \right) &=  \sum_{k=1}^{K}\log_2(1+\gamma_{k}).
\end{align}
Since the SINR for user $k$ is determined by both the transmit beamforming vectors $\{\mathbf{w}_{k}\}$ and the array rotation angle $\phi$, adjusting the rotation angle of RA-array offers an additional spatial DoF for improving the communication performance, which will be investigated in Section~\ref{Sec3-D}. 

\textbf{\emph{Sensing Metric}}: For target sensing, we consider CRB for angle estimation of $\tilde{\theta}$, which characterizes the lower bound of unbiased estimation for angle $\tilde{\theta}$.
Note that $\tilde{\theta}$ is a direction of interest for target sensing, which is typical in target tracking scenarios under which the BS designs its beamforming and array rotation towards an estimated/predicted angle~\cite{LiuCRB2022}.
According to~\cite{LiuCRB2022}, the CRB for estimation $\tilde{\theta}$ can be expressed as
\begin{align}\label{CRB_generalform}
	\text{CRB}\big( \tilde{\theta} \big)  = \frac{1}{2  T \text{SNR}_{\rm s} \big|\big|\dot{\mathbf{a}}_{\rm R}(\tilde{\theta}) \big|\big|^2 \big(  \mathbf{a}_{\rm T}^H(\tilde{\theta}) \mathbf{W} \mathbf{a}_{\rm T}(\tilde{\theta}) \big)  },
\end{align}
where $ \mathbf{W}  = \sum_{k=1}^{K}  \mathbf{w}_{k}  \mathbf{w}_{k}^{H}$, $\text{SNR}_{\rm s} =\frac{|\beta_{\rm s} |^2}{\sigma_{\rm s}^2} $, and  $ \dot{\mathbf{a}}_{\rm R}  $ represents the derivative of the receive steering vector ${\mathbf{a}}_{\rm R}$ with respect to (w.r.t.) $\tilde{\theta}$, which is given by~\cite{LiuCRB2022} 
\begin{align}\label{der_steer}
	\dot{\mathbf{a}}_{\rm R} = j \frac{2\pi}{\lambda} \cos\tilde{\theta} \Big[d_{1}  e^{j \frac{2\pi}{\lambda} d_{{\rm r},1} \sin\tilde{\theta} },\ldots,d_{M_{\rm r}}e^{j \frac{2\pi}{\lambda} d_{{\rm r},M_{\rm r}}\sin\tilde{\theta} }\Big]^T.
\end{align}

\begin{lemma}\label{lemma_CRB}
	\rm 
	Under the uniform linear array model of~\eqref{Sum_dis_tr}, the CRB for target angle estimation $\tilde{\theta}$ can be rewritten as
	\begin{align}\label{CRB}
		\text{CRB}\big( \tilde{\theta} \big)  = \frac{\chi}{\cos^2\tilde{\theta} \left(  \mathbf{a}_{\rm T}^H(\tilde{\theta}) \mathbf{W} \mathbf{a}_{\rm T}(\tilde{\theta})\right) },
	\end{align}
where $\chi = \frac{ 3 \lambda^2 ( M_{\rm r}-1)}{2 \pi^2 T \text{SNR}_{\rm s} M_{\rm r} (M_{\rm r} + 1) D_{\rm r}^2 }$, with $D_{\rm r} = (M_{\rm r} - 1)d $ representing the aperture of the receive array.
\end{lemma}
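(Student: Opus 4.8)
The plan is to collapse the general expression \eqref{CRB_generalform} into the stated closed form \eqref{CRB} by evaluating the only array-geometry-dependent factor, $\big\|\dot{\mathbf{a}}_{\rm R}(\tilde{\theta})\big\|^2$, in closed form and then folding every angle-independent constant into the single symbol $\chi$. The observation that makes this tractable is that, from \eqref{der_steer}, each entry of $\dot{\mathbf{a}}_{\rm R}$ is the product of the common scalar $j\frac{2\pi}{\lambda}\cos\tilde{\theta}$, a unit-modulus phase $e^{j\frac{2\pi}{\lambda}d_{{\rm r},m_{\rm r}}\sin\tilde{\theta}}$, and the position $d_{{\rm r},m_{\rm r}}$. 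Taking the squared norm therefore annihilates all phases and yields
\begin{align}
\big\|\dot{\mathbf{a}}_{\rm R}(\tilde{\theta})\big\|^2 = \frac{4\pi^2}{\lambda^2}\cos^2\tilde{\theta}\sum_{m_{\rm r}=1}^{M_{\rm r}} d_{{\rm r},m_{\rm r}}^2,
\end{align}
so the entire task reduces to evaluating the sum of squared antenna positions.

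For that sum I would insert the centered receive-array coordinates $d_{{\rm r},m_{\rm r}} = \frac{(m_{\rm r}-1)D_{\rm r}}{M_{\rm r}-1} - \frac{D_{\rm r}}{2}$, the receive-side analogue of \eqref{Sum_dis_tr}. Substituting $n = m_{\rm r}-1$ and pulling out $\frac{D_{\rm r}^2}{(M_{\rm r}-1)^2}$ turns the sum into the symmetric quadratic series $\sum_{n=0}^{M_{\rm r}-1}\big(n-\frac{M_{\rm r}-1}{2}\big)^2$. Expanding and applying the closed forms for $\sum n$ and $\sum n^2$, the cross term cancels against part of the quadratic and the whole expression simplifies to $\frac{(M_{\rm r}-1)M_{\rm r}(M_{\rm r}+1)}{12}$, giving
\begin{align}
\sum_{m_{\rm r}=1}^{M_{\rm r}} d_{{\rm r},m_{\rm r}}^2 = \frac{D_{\rm r}^2 M_{\rm r}(M_{\rm r}+1)}{12(M_{\rm r}-1)}.
\end{align}

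Finally I would substitute both displays back into \eqref{CRB_generalform}: the factor $\cos^2\tilde{\theta}$ survives in the denominator alongside the beamforming term $\mathbf{a}_{\rm T}^H(\tilde{\theta})\mathbf{W}\mathbf{a}_{\rm T}(\tilde{\theta})$, while the remaining angle-independent constants group into $\frac{3\lambda^2(M_{\rm r}-1)}{2\pi^2 T\,\text{SNR}_{\rm s}M_{\rm r}(M_{\rm r}+1)D_{\rm r}^2}$, which is precisely $\chi$; this reproduces \eqref{CRB}. There is no conceptual difficulty here — the only place demanding care is the bookkeeping in the symmetric quadratic sum, where it is easy to mishandle the centering constant $\frac{M_{\rm r}-1}{2}$ or the index range, so I would keep the substitution $n=m_{\rm r}-1$ explicit to make the cancellation transparent.
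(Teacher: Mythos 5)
Your proposal is correct and follows exactly the same route as the paper's (one-line) proof: substituting the centered receive-array positions and the derivative steering vector \eqref{der_steer} into \eqref{CRB_generalform}, with the symmetric quadratic sum evaluating to $\frac{(M_{\rm r}-1)M_{\rm r}(M_{\rm r}+1)}{12}$ so that $\sum_{m_{\rm r}} d_{{\rm r},m_{\rm r}}^2 = \frac{D_{\rm r}^2 M_{\rm r}(M_{\rm r}+1)}{12(M_{\rm r}-1)}$ and the constants collapse into $\chi$. In fact, you supply the algebraic bookkeeping that the paper leaves implicit, and all of your intermediate expressions check out.
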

\begin{proof}
	Substituting $d_{{\rm r},m_{\rm r}}\!=\!\frac{(m_{\rm r}-1)D_{\rm r}}{M_{\rm r}-1} \!-\! \frac{D_{\rm r}}{2}, \forall  m_{\rm r}\!\in\!\{\!1,2,\ldots,M_{\rm r}\!\}$ and~\eqref{der_steer}  into~\eqref{CRB_generalform}, we can obtain the CRB for $\tilde{\theta}$ in~\eqref{CRB}.
\end{proof}
It is observed from~\eqref{CRB} that minimizing the angle estimation CRB is equivalent to maximizing
\begin{align}\label{perso-CRB}
	f_{\rm s} \left(\{\mathbf{w}_{k}\},\phi \right) &=  \cos^2\tilde{\theta} \left(  \mathbf{a}_{\rm T}^H(\tilde{\theta}) \mathbf{W} \mathbf{a}_{\rm T}(\tilde{\theta}) \right),
\end{align}
which is jointly determined by the transmit beamforming vectors \{$\mathbf{w}_{k}$\} and the array rotation angle $\phi$.

\section{Problem Formulation and Proposed Solution}\label{Sec:3}
In this section, we first formulate an optimization problem to optimize both C\&S performances. To solve this non-convex problem, an efficient algorithm is proposed to obtain its suboptimal solution by using the BCD technique and 1D exhaustive search.
In addition, we analytically show that for the communication-only case, array rotation provides an additional gain for enhancing the received SNR, while for the sensing-only case, equivalent spatial aperture can be enlarged for enhancing sensing accuracy by optimizing the array rotation angle.

\subsection{Problem Formulation}
Our target is to jointly maximize the achievable sum-rate of communication users and minimize the CRB for target angle estimation (which is equivalent to maximizing~\eqref{perso-CRB}). This problem can be mathematically formulated as 
\begin{subequations}
	\begin{align}
		(\textbf{P1}):\; \max_{\{\mathbf{w}_{k}\},\phi}~& \varpi_1	f_{\rm c}\left(\{\mathbf{w}_{k}\},\phi \right) + \varpi_2 f_{\rm s} \left(\{\mathbf{w}_{k}\},\phi \right) \nonumber \\
		{\rm {s.t.}}~~ & \sum_{k=1}^{K} ||\mathbf{w}_{k}||_2^{2} \le P,\label{C_Pt}\\
		&\phi \in \mathcal{C}_{\phi},\label{C_Ro_area}
	\end{align}
\end{subequations}
where $ \varpi_1 $ and $\varpi_2   $  are the weighting factors for $f_{\rm c}\left(\{\mathbf{w}_{k}\},\phi \right)$ and $ f_{\rm s}\left(\{\mathbf{w}_{k}\},\phi \right) $, respectively, which satisfy $ \varpi_1 + \varpi_2  = 1  $.
Additionally,  \eqref{C_Pt} is the transmit power constraint for the BS, and  \eqref{C_Ro_area} is the array rotation constraint for restricting the admissible rotation region within $\mathcal{C}_{\phi} = [\phi_{\rm min},\phi_{\rm max}]$. 
Note that Problem (\textbf{P1}) is a multi-objective optimization problem, which can balance between C\&S performances by adjusting the weighting factors $\{\varpi_1,\varpi_2\}$. Nevertheless, the solution to Problem (\textbf{P1}) is difficult to be obtained directly due to the non-concave objective function and coupled variables in Problem (\textbf{P1}), namely, $\{\mathbf{w}_{k}\}$ and $\phi$. Furthermore, the array rotation angle constraint makes it more difficult to solve this problem.

To solve Problem (\textbf{P1}), we first equivalently reformulate it by decomposing the problem into two subproblems, corresponding to an inner problem for optimizing the transmit beamforming vectors $\{\mathbf{w}_{k}\}$ given array rotation angle $\phi$, and an outer problem for optimizing the array rotation angle $\phi$, detailed in the following. 

\textbf{\underline{Inner problem}:}  For any feasible array rotation angle $\phi$, the inner problem for transmit beamforming optimization is formulated as 
\begin{align}
		(\textbf{P2}):\; \max_{\{\mathbf{w}_{k}\}}~& \varpi_1	f_{\rm c}\left(\{\mathbf{w}_{k}\},\phi \right) + \varpi_2 f_{\rm s} \left(\{\mathbf{w}_{k}\},\phi\right) \nonumber \\
		{\rm {s.t.}}~~ &\eqref{C_Pt}. \nonumber
\end{align}
We denote  $ g(\phi)\triangleq \varpi_1 f_{\rm c}(\{\bar{\mathbf{w}}_{k}\},\phi) +  \varpi_2 f_{\rm s}(\{\bar{\mathbf{w}}_{k}\},\phi)$, which represents the optimized objective value of Problem (\textbf{P2}), where $\bar{\mathbf{w}}_{k}$ is the optimized beamforming vectors.

\textbf{\underline{Outer problem}:}  Given the inner problem, the outer problem is to optimize  $\phi$ for maximizing $  g(\phi) $, which is given by
\begin{align}
		(\textbf{P3}):\; \max_{\phi}~& g(\phi) \nonumber \\
		{\rm {s.t.}}~~ &\eqref{C_Ro_area}. \nonumber	
\end{align}

\subsection{Proposed Solutions to Problems (\textbf{P2}) and (\textbf{P3})}\label{Sec3-C}
{\emph{(1)~Solution to (\textbf{P2})}}:
Note that for the inner problem (\textbf{P2}), the sensing performance metric (i.e., $ \varpi_2  f_{\rm s} \left(\{\mathbf{w}_{k}\},\phi \right) $) is presented in the objective function; thus, it is not possible to directly apply the weighted minimum mean-square error (WMMSE) method or the closed-form expression obtained by the fractional programming (FP) method~\cite{Guo2020WSR} for the communication-only case to solve it. To address this issue, we extend the FP method to a more generalized form, which incorporates both the fractional and non-fractional terms into a new fractional form. Based on this, the inner problem can be efficiently solved via the BCD technique. 

Specifically, we first introduce auxiliary variables $ \boldsymbol{\alpha} = [\alpha_{1},\alpha_{2},\ldots,\alpha_{K}]^T $ to circumvent the complex fractional forms of SINR within the logarithm function, and impose that  $\alpha_{k}>0, \forall  k\in \{1,2,\ldots,K\} $. As such, the transmit beamforming vectors can be obtained by solving the following problem
\begin{align}
	(\textbf{P2.1}):\; \max_{\{\mathbf{w}_{k}\},\boldsymbol{\alpha}}~&\tilde{f} (\{\mathbf{w}_{k}\},\boldsymbol{\alpha}) \nonumber \\
	{\rm {s.t.}}~~ &~\eqref{C_Pt}. \nonumber
\end{align}
In Problem (\textbf{P2.1}), the objective function is given by 
\begin{align}\label{obj_lag}
	\tilde{f} (\{\mathbf{w}_{k}\},\boldsymbol{\alpha}) &= \sum_{k=1}^{K} \Big(
	\varpi_1 \big( \log_2(1+\alpha_{k}) \big) + \varpi_2 \cos^2\tilde{\theta} \big|\mathbf{a}_{\rm T}^H(\tilde{\theta}) \mathbf{w}_{k}\big|^2 \nonumber \\
	-\varpi_1 \alpha_{k} &+   \varpi_1 (1+\alpha_{k}) \frac{ |\mathbf{h}_{k}^{H} \mathbf{w}_{k}|^2 }{\sum_{i=1}^{K} |\mathbf{h}_{k}^{H} \mathbf{w}_{i}|^2 +\sigma_{k}^{2} } 
	\Big),
\end{align}
which contains both the fractional and non-fractional terms in~\eqref{obj_lag}, making it challenging to optimally solve Problem (\textbf{P2.1}). To address this issue, we regard
 $\varpi_2 \cos^2\tilde{\theta} \sum_{k=1}^{K} \Big|\mathbf{a}_{\rm T}^H(\tilde{\theta}) \mathbf{w}_{k}\Big|^2$ in~\eqref{obj_lag} as a special case of fractional forms. As such, the quadratic transform method~\cite{Shen2018} can be employed to transform~\eqref{obj_lag} into a more tractable form, with the objective function re-expressed as
\begin{align}
	&\hat{f}(\{\mathbf{w}_{k}\},\boldsymbol{\alpha},\mathbf{b},\mathbf{b}_{\rm s}) = \sum_{k=1}^{K} \Big( 
	\varpi_1 \big( \log_2(1+\alpha_{k}) \big)
	-\varpi_1 \alpha_{k} \nonumber \\
	&~~+2\sqrt{\varpi_1 (1+\alpha_{k})} \mathcal{R}\{ b_{k}^{\dagger}  \mathbf{h}_{k}^{H} \mathbf{w}_{k}  \} - | b_{k} |^2 \Big(\sum_{i=1}^{K}  |\mathbf{h}_{k}^{H} \mathbf{w}_{i}|^2 +\sigma_{k}^{2}\Big) \nonumber\\
	&~~+2 \sqrt{\varpi_2 \cos^2\tilde{\theta}} \mathcal{R}\{ b_{{\rm s},k}^{\dagger} \mathbf{a}_{\rm T}^H(\tilde{\theta}) \mathbf{w}_{k}  \}-|b_{{\rm s},k}  |^2
	\Big).
\end{align}
 Herein, $(\cdot)^\dagger$ denotes the conjugate operation. In addition, $ \mathbf{b} = [b_1, b_2, \ldots, b_{K}]^T $ and $ \mathbf{b}_{\rm s} = [b_{{\rm s},1}, b_{{\rm s},2}, \ldots, b_{{\rm s},K}]^T $  are the auxiliary variables,  introduced for addressing the fractional forms (i.e., $ \frac{ |\mathbf{h}_{k}^{H} \mathbf{w}_{k}|^2 }{\sum_{i=1}^{K} |\mathbf{h}_{k}^{H} \mathbf{w}_{i}|^2 +\sigma_{k}^{2} } $) and non-fractional forms (i.e., $\big|\mathbf{a}_{\rm T}^H(\tilde{\theta}) \mathbf{w}_{k}\big|^2$) in~\eqref{obj_lag}, respectively. 
 As $ \hat{f} (\{\mathbf{w}_{k}\},\boldsymbol{\alpha},\mathbf{b},\mathbf{b}_{\rm s}) $ is a concave function w.r.t. each variable in $\{\{\mathbf{w}_{k}\},\boldsymbol{\alpha},\mathbf{b},\mathbf{b}_{\rm s}\}$, we can obtain the optimal $\boldsymbol{\alpha} $, $\mathbf{b} $, $ \mathbf{b}_{\rm s}$ and $ \{\mathbf{w}_{k}\} $ for the transformed problem of (\textbf{P2.1}),  based on their first-order optimality conditions, which are respectively obtained as 
\begin{subequations}\label{BCD_closed}
	\begin{align}
		&\alpha^{*}_{k} = \frac{ |\mathbf{h}_{k}^{H} \mathbf{w}_{k}|^2 }{\sum_{i=1,i\neq k}^{K} |\mathbf{h}_{k}^{H} \mathbf{w}_{i}|^2 +\sigma_{k}^{2} },\forall k \!\in\! \{1,\ldots,K\}, \\ 
		&b^{*}_{k} = \frac{\sqrt{\varpi_1 (1+\alpha_{k})}\mathbf{h}_{k}^{H} \mathbf{w}_{k}}{\sum_{i=1}^{K}  |\mathbf{h}_{k}^{H} \mathbf{w}_{i}|^2 +\sigma_{k}^{2}},\forall k \!\in\! \{1,\ldots,K\}, \\
		&b^{*}_{{\rm s},k} = \sqrt{\varpi_2 \cos^2\tilde{\theta}   }\mathbf{a}_{\rm T}^H(\tilde{\theta}) \mathbf{w}_{k},\forall k \!\in\! \{1,\ldots,K\}, \\
		&\mathbf{w}^{*}_{k}\! =\! \Big(\mu \mathbf{I}_{M_{\rm t}} \!+\! \sum_{i=1}^{K} \!| b_{i} |^2 \mathbf{h}_{i} \mathbf{h}_{i}^{H}  \Big)^{-1} \!{\mathbf{h}}_{{\rm eff},k},\forall k \!\in\! \{1,\ldots,K\}, \label{beamforming18d}
	\end{align}
\end{subequations}
with $ {\mathbf{h}}_{{\rm eff},k} \triangleq b_{k} \sqrt{\varpi_1 (1+\alpha_{k})} \mathbf{h}_{k}  + b_{{\rm s},k} \sqrt{\varpi_2 \cos^2\tilde{\theta} } \mathbf{a}_{\rm T}(\tilde{\theta})  $.  By iteratively updating $\{\boldsymbol{\alpha}, \mathbf{b}, \mathbf{b}_{\rm s}, \{\mathbf{w}_{k}\}\}$, we obtain  suboptimal beamforming vectors $\{\mathbf{w}^{*}_{k}\}$ to Problem (\textbf{P2}). 

{\emph{(2)~Solution to (\textbf{P3})}}:  Given the solution to the inner problem, the outer problem (\textbf{P3}) is still challenging to solve directly due to the lack of a closed-form expression for $\{\mathbf{w}^{*}_{k}\}$. Moreover, the derivatives of the steering vectors (e.g., $ \mathbf{a}_{\rm T}({\tilde{\theta}}) $ and $ \mathbf{a}_{\rm R}({\tilde{\theta}}) $) w.r.t. $\phi$ are  highly complicated, rendering the problem more difficult to solve. To tackle these difficulties, we propose to apply a 1D exhaustive search to find the optimal array rotation angle within the admissible rotation region $\mathcal{C}_{\phi}$.

\subsection{Discussions}\label{Sec3-D}
{\emph{(1)~Algorithm convergence and computational complexity}}: Note that the convergence of the FP-based method has been  verified in~\cite{Guo2020WSR} and~\cite{Shen2018}. Thus, the convergence of the proposed algorithm can be guaranteed.
Next, we consider the computational complexity analysis of the proposed algorithm. For the inner problem, its computational complexity is mainly determined by the FP-based method, which is in the order of $\mathcal{O}(KM_{\rm t}^3)$~\cite{Guo2020WSR}. Hence, the computational complexity order for solving the inner problem is $\mathcal{O}(I_{2} KM_{\rm t}^3)$, where $I_{2}$ represents the number of BCD iterations.
Then, by denoting $ I_{1} $ as the number of 1D exhaustive search, the overall computational complexity for solving Problem (\textbf{P1}) is in the order of $\mathcal{O}(I_{1} I_{2} KM_{\rm t}^3)$.

{\emph{(2)~Special cases}}:
To obtain useful insights into the RA-array in both C\&S, we consider two special cases, namely, the communication-only case (i.e., $\varpi_1 = 1 $ and $\varpi_2 = 0 $) and the sensing-only case (i.e.,  $\varpi_1 = 0 $ and $\varpi_2 = 1 $). Note that the proposed algorithm for solving Problem (\textbf{P1}) can be applied for solving the problems for these two special cases.

\textbf{\emph{Communication-only case}:}
When $ \varpi_1 = 1 $ and $\varpi_2 = 0 $, Problem (\textbf{P1}) reduces to the conventional sum-rate maximization problem. It can be shown that the sum-rate analysis for the multi-user case is intractable for RA systems. Thus, we consider the single-user case (i.e., $K=1$) with an LoS path and a NLoS path, and evaluate the effect of array rotation on the communication performance.
For this case, the BS-user channel reduces to  
\begin{align}\label{simple_channel}
	\mathbf{h}^{H}_{1}(\phi) = 
	\beta_{1,0} \mathbf{a}^{H}(\tilde{\theta}_{1,0} ) + 
	\beta_{1,1} \mathbf{a}^{H}(\tilde{\theta}_{1,1} ) .
\end{align}
As such, the received signal at the user can be expressed as
$	y_{1} = \mathbf{h}_{1}^{H} \mathbf{w}_{1} s_{1} + z_{1},$
where $ z_{1}\sim\mathcal{CN}(0,\sigma_{1}^2)$ is the AWGN.
The optimal transmit beamforming for the user can be obtained based on maximal ratio transmission (MRT) (i.e., $\mathbf{w}_{1} = \sqrt{P} \frac{\mathbf{h}_{1}}{||\mathbf{h}_{1}||_2}$). Then, the corresponding maximum SNR at the user is obtained~as
\begin{align}
	\gamma_{1} &=  \frac{ P|\mathbf{h}_{1}^{H} \mathbf{h}_{1}| }{\sigma_{1}^{2} } \overset{(a)}{=} \left( |\beta_{1,0}|^2 + |\beta_{1,1}|^2\right) M_{\rm t}  \nonumber \\
	&+ 2|\beta_{1,0} \beta_{1,1} | \mathcal{R}\{ \mathbf{a}^H(\tilde{\theta}_{1,0}  )  \mathbf{a}(\tilde{\theta}_{1,1}  )  \}, 
\end{align}  
since $\mathbf{a}^H(\tilde{\theta}_{1,0}  ) \mathbf{a}(\tilde{\theta}_{1,0}  ) = \mathbf{a}^H(\tilde{\theta}_{1,1}  ) \mathbf{a}(\tilde{\theta}_{1,1}  ) = M_{\rm t} $. As such, the rotation angle of RA-array is optimized to maximize the correlation between $\mathbf{a}(\tilde{\theta}_{1,0}  )$ and $ \mathbf{a}(\tilde{\theta}_{1,1}  ) $.
To characterize the communication performance gain introduced by RAs, we define a rotation gain of RAs as follows.
\begin{definition}[Rotation Gain] 
	\rm The rotation gain of RAs  in communication performance is defined as
	\begin{align}\label{Com_phi}
		G_{R}(\phi) =& \frac{\Big|\mathbf{a}^H(\tilde{\theta}_{1,0}  )  \mathbf{a}(\tilde{\theta}_{1,1}  ) \Big|}{\Big|\mathbf{a}^H({\theta}_{1,0}  )  \mathbf{a}({\theta}_{1,1}  ) \Big|}, 
	\end{align}
which is the ratio of the gain of RAs over that of fixed-rotation antennas. 
\end{definition}
Based on the above, by denoting $\zeta_{1} = \frac{\theta_{1,0}-\theta_{1,1}}{2}$ and $ \zeta_{2} =\frac{\theta_{1,0}+\theta_{1,1}}{2} $, the maximum rotation gain for RAs in communication performance is obtained as
\begin{align}
	G_{R}(\phi^{*}) =	M_{\rm t}{\Bigg|
		\frac{\sin\big(\pi \sin\zeta_{1} \cos\zeta_{2} \big)}
		{\sin\big( M_{\rm t } \pi  \sin\zeta_{1} \cos\zeta_{2}\big)}
		\bigg|},
\end{align}
where the optimal array rotation is set as $ \phi^{*} + \zeta_{2} = \frac{n\pi}{2} $, with $n$ being any non-zero integer. Note that $ G_{R}(\phi) > 1$  always holds when $ \theta_{1,0} \neq \theta_{1,1} $. 
Thus, adjusting the rotation angle of RAs helps to increase the correlation between $\mathbf{a}(\tilde{\theta}_{1,0}  )$ and $ \mathbf{a}(\tilde{\theta}_{1,1}  ) $, leading to the enhanced received SNR.

\textbf{\emph{Sensing-only case}:}
When $ \varpi_1 = 0 $ and $\varpi_2 = 1 $, Problem (\textbf{P1}) reduces to a quadratic programming problem, which aims to maximize $f_{\rm s} \left(\{\mathbf{w}_{k}\},\phi \right) $ (equivalently minimizing the CRB for angle estimation). The optimal beamforming matrix for this problem can be obtained as ${\mathbf{W}^{*}} =\frac{P}{M_{\rm t} }  \mathbf{a}_{\rm T}(\tilde{\theta}) \mathbf{a}_{\rm T}^H(\tilde{\theta}) $, based on which, we have $  \mathbf{a}_{\rm T}^H(\tilde{\theta}) \mathbf{W}^{*} \mathbf{a}_{\rm T}(\tilde{\theta}) =P M_{\rm t} $. Therefore, to maximize $f_{\rm s} \left(\{\mathbf{w}_{k}\},\phi \right) $, the rotation angle $\phi$ should be optimized to maximize $\cos\tilde{\theta}$, which can be obtained by solving the following problem $\max_{\phi\in \mathcal{C}_{\phi}}~\cos^2 (\theta+\phi)$. It can be easily shown that the optimal solution satisfies $\phi^{*}+\theta = 0~\text{or}~\pi$.

\begin{remark}[Equivalent spatial aperture]\label{SEA}
	\rm
	For the sensing-only case, the CRB for target angle estimation is determined by the \emph{equivalent spatial aperture}, which is defined as $D_{\rm SEA} = D_{\rm r} \cos(\theta+\phi)$. For  fixed-rotation antennas, $ D_{\rm SEA} $ is determined by the target direction $\theta$, while RAs offer an additional DoF for attaining a lower CRB by adjusting its array rotation angle $\phi$, which will be  numerically validated in Section~\ref{Sec:SR}. 
\end{remark}

\section{Numerical Results}\label{Sec:SR}

Numerical results are presented in this section to validate the efficiency of RAs for ISAC. Unless otherwise specified, the system parameters are set in Table~\ref{SimulationParameters}. 
\begin{table}[t!]
	\centering
	\caption{System Parameters}
	\label{SimulationParameters}
	\begin{tabular}{c|c}
		\hline 
		\textbf{Parameter} & \textbf{Value}          
		\\ \hline Number of transmit antennas $M_{\rm t}$ & 16
		\\ \hline Number of receive antennas $M_{\rm r}$ & 16
		\\ \hline Number of NLoS paths $L$  & 5
		\\ \hline Number of users $K$  & 4
		\\ \hline Maximum transmit power $P$  & 1 W
		\\ \hline Number of snapshots $T$ & 100
		\\ \hline Distributions of $\theta$ and $\theta_{k,\ell}$ & $\mathcal{U}(-\frac{\pi}{3},\frac{\pi}{3})$
		\\ \hline Admissible rotation region $\mathcal{C}_{\phi}$ & $ [-\frac{\pi}{6},\frac{\pi}{6}]$
		\\ \hline Normalized path gain $ {|\beta_{k,\ell}|^2}/{\sigma_k^2} $ & 0 dB 
		\\ \hline $\text{SNR}_{\rm s} {|\beta_{\rm s}|^2}/{\sigma_{\rm s}^2} $ & -10 dB
		\\ \hline
	\end{tabular}
	\vspace{-10pt}
\end{table}
\begin{figure}[t]
	\centering
	\includegraphics[width=0.35\textwidth]{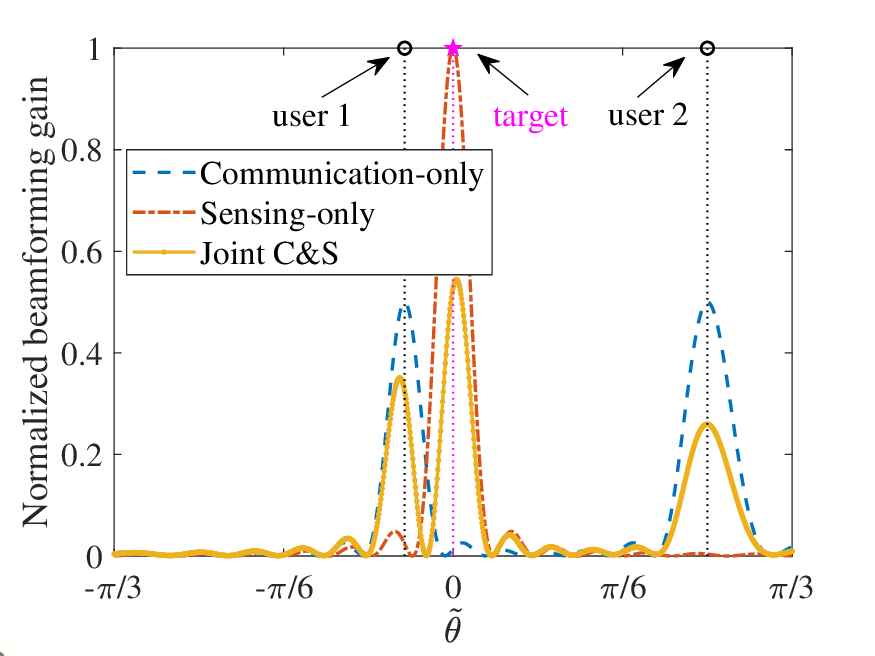}
	\caption{Normalized beamforming gain for different cases.} 
	\label{Fig:Waveforms}
	\vspace{-10pt}
\end{figure}
For performance comparison, we consider the following benchmark schemes:
\begin{itemize}
	\item \emph{Beamforming optimization only}: In this scheme, the array rotation angle is fixed, while the transmit beamforming vectors are optimized  by using the proposed method.
	
	\item \emph{Rotation optimization only}: In this scheme, the array rotation angle is optimized by using the proposed method, while the transmit beamforming vectors are designed via the zero forcing (ZF) technique.
	
\end{itemize}

To showcase the efficiency of the proposed scheme in balancing C\&S performances, the normalized beamforming gain, denoted by $\text{Tr}(\mathbf{a}^H_{\rm T}({\tilde{\theta}}) \mathbf{W} \mathbf{a}_{\rm T}({\tilde{\theta}}) ) $, across the effective spatial angle ranging from $-\frac{\pi}{3}$ to $\frac{\pi}{3}$ is presented in Fig.~\ref{Fig:Waveforms} with $K=2$. From Fig.~\ref{Fig:Waveforms}, it is observed that, for both sensing-only or communication-only cases, the transmit beam is steered towards the direction of the target or users to minimize the CRB or maximize achievable sum-rate, whereas for the joint C\&S case,  waveforms are optimized to concurrently perform downlink communication and target sensing. However, the normalized beamforming gain is slightly reduced, indicating that there exists a performance trade-off between~C\&S. 

Fig.~\ref{Fig:Tradeoff} shows the achievable  C\&S performances under different schemes. Note that, in contrast to the scheme without rotation angle design (i.e., beamforming optimization only), the proposed RA scheme consistently attains a lower CRB and/or a higher achievable sum-rate, thereby demonstrating the efficiency of RAs in enhancing both C\&S performances. In addition, compared with the rotation optimization only scheme, the proposed algorithm is capable of attaining a lower CRB for target sensing or a higher achievable sum-rate, which verifies the superior efficiency of the proposed joint design. 

\begin{figure}[t]
	\centering
	\includegraphics[width=0.35\textwidth]{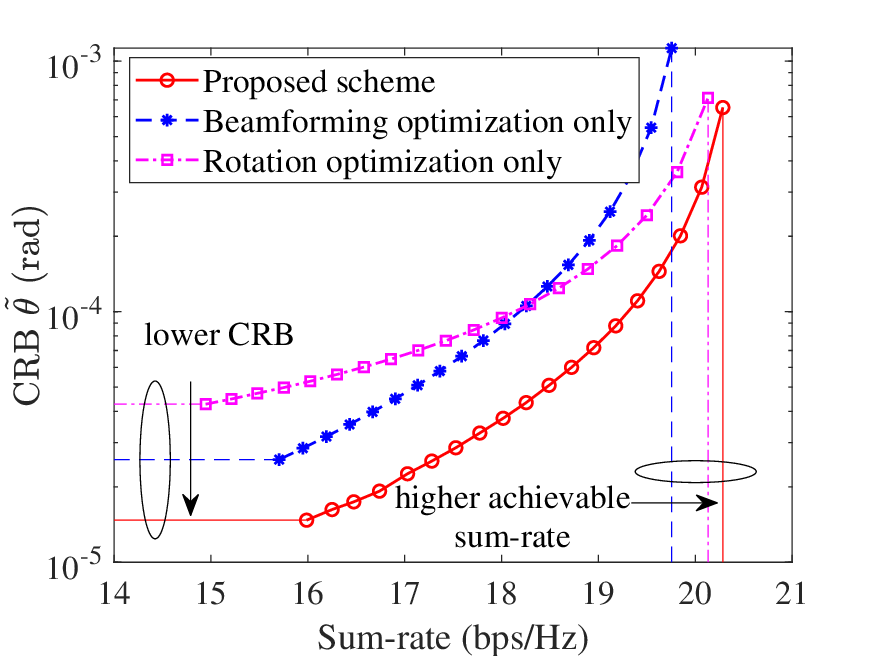}
	\caption{Achievable C\&S performances for different schemes.} 
	\label{Fig:Tradeoff}
	\vspace{-10pt}
\end{figure}

\section{Conclusions}\label{Sec:Con}
In this letter, we proposed an RA-enabled ISAC system, where an RA-array is deployed at the BS to enhance both C\&S performances by exploiting an additional spatial DoF offered by the array rotation. Specifically, we formulated an optimization problem to maximize weighted C\&S performances by jointly designing the transmit beamforming and the array rotation angle. To solve this non-convex problem, we transformed it into two subproblems, and then applied the BCD technique and 1D exhaustive search to solve them efficiently, respectively. In addition, to obtain useful insights, two special cases were considered to showcase the superiority of RAs in improving the rotation gain for the enhanced communication SNR and enlarging the equivalent spatial aperture for the more accurate target sensing. Finally, we presented numerical results to demonstrate the superior performance of RAs over fixed-rotation antennas for ISAC. 

\bibliographystyle{IEEEtran}
\bibliography{Ref_RAISAC.bib}

\begin{thebibliography}{10}
\providecommand{\url}[1]{#1}
\csname url@samestyle\endcsname
\providecommand{\newblock}{\relax}
\providecommand{\bibinfo}[2]{#2}
\providecommand{\BIBentrySTDinterwordspacing}{\spaceskip=0pt\relax}
\providecommand{\BIBentryALTinterwordstretchfactor}{4}
\providecommand{\BIBentryALTinterwordspacing}{\spaceskip=\fontdimen2\font plus
\BIBentryALTinterwordstretchfactor\fontdimen3\font minus
  \fontdimen4\font\relax}
\providecommand{\BIBforeignlanguage}[2]{{%
\expandafter\ifx\csname l@#1\endcsname\relax
\typeout{** WARNING: IEEEtran.bst: No hyphenation pattern has been}%
\typeout{** loaded for the language `#1'. Using the pattern for}%
\typeout{** the default language instead.}%
\else
\language=\csname l@#1\endcsname
\fi
#2}}
\providecommand{\BIBdecl}{\relax}
\BIBdecl

\bibitem{LiuCRB2022}
F.~Liu, Y.-F. Liu, A.~Li, C.~Masouros, and Y.~C. Eldar, ``Cram\' er-{R}ao bound
  optimization for joint radar-communication beamforming,'' \emph{{IEEE} Trans.
  Signal Process.}, vol.~70, pp. 240--253, 2022.

\bibitem{Shao2022Sensing}
X.~Shao, C.~You, W.~Ma, X.~Chen, and R.~Zhang, ``Target sensing with
  intelligent reflecting surface: Architecture and performance,'' \emph{{IEEE}
  J. Sel. Areas Commun.}, vol.~40, no.~7, pp. 2070--2084, Jul. 2022.

\bibitem{Zhu2024Tutorial}
L.~Zhu \emph{et~al.}, ``A tutorial on movable antennas for wireless networks,''
  \emph{IEEE Commun. Surv. Tut.}, 2025, Early Access.

\bibitem{YouNGAT}
C.~You \emph{et~al.}, ``Next generation advanced transceiver technologies for
  {6G} and beyond,'' \emph{IEEE J. Sel. Areas Commun.}, vol.~43, no.~3, pp.
  582--627, Mar. 2025.

\bibitem{WongFAS}
K.-K. Wong, A.~Shojaeifard, K.-F. Tong, and Y.~Zhang, ``Fluid antenna
  systems,'' \emph{IEEE Trans. Wireless Commun.}, vol.~20, no.~3, pp.
  1950--1962, Mar. 2021.

\bibitem{ma2024movable}
W.~Ma, L.~Zhu, and R.~Zhang, ``Movable antenna enhanced wireless sensing via
  antenna position optimization,'' \emph{IEEE Trans. Wireless Commun.},
  vol.~23, no.~11, pp. 16\,575--16\,589, Nov. 2024.

\bibitem{zhou2024fluid}
L.~Zhou, J.~Yao, M.~Jin, T.~Wu, and K.-K. Wong, ``Fluid antenna-assisted {ISAC}
  systems,'' \emph{{IEEE} Wireless Commun. Lett.}, vol.~13, no.~12, pp.
  3533--3537, Dec. 2024.

\bibitem{Shao6DMATWC}
X.~Shao, Q.~Jiang, and R.~Zhang, ``{6D} movable antenna based on user
  distribution: Modeling and optimization,'' \emph{IEEE Trans. Wireless
  Commun.}, vol.~24, no.~1, pp. 355--370, Jan. 2025.

\bibitem{Shao2025JSAC}
X.~Shao, R.~Zhang, Q.~Jiang, and R.~Schober, ``{6D} movable antenna enhanced
  wireless network via discrete position and rotation optimization,''
  \emph{{IEEE} J. Sel. Areas Commun.}, vol.~43, no.~3, pp. 674--687, Mar. 2025.

\bibitem{zheng2025rotatable}
B.~Zheng, Q.~Wu, and R.~Zhang, ``Rotatable antenna enabled wireless
  communication: Modeling and optimization,'' \emph{arXiv preprint
  arXiv:2501.02595}, 2025.

\bibitem{Li2024RA}
N.~Li, J.~Dong, Z.~Shao, J.~Cao, H.~Li, and H.~Ding, ``Antenna deployment
  optimization for coverage performance improvement in high-speed railway
  communications,'' \emph{IEEE Antennas Wire. Propag. Lett.}, vol.~23, no.~10,
  pp. 2929--2933, Oct. 2024.

\bibitem{Guo2020WSR}
H.~Guo, Y.-C. Liang, J.~Chen, and E.~G. Larsson, ``Weighted sum-rate
  maximization for reconfigurable intelligent surface aided wireless
  networks,'' \emph{{IEEE} Trans. Wireless Commun.}, vol.~19, no.~5, pp.
  3064--3076, May 2020.

\bibitem{Shen2018}
K.~Shen and W.~Yu, ``Fractional programming for communication systems—{P}art
  {II}: Uplink scheduling via matching,'' \emph{{IEEE} Trans. Signal Process.},
  vol.~66, no.~10, pp. 2631--2644, May 2018.

\end{thebibliography}

\end{document}